\documentclass[11pt]{article}

\usepackage[margin=1in]{geometry}

\usepackage{newpxtext}

\usepackage{amsmath}
\usepackage{amssymb}
\usepackage{amsthm}
\usepackage{mathtools}
\usepackage{enumitem}

\usepackage[dvipsnames]{xcolor}
\colorlet{LinkColor}{purple}
\colorlet{CiteColor}{Aquamarine}
\colorlet{URLColor}{magenta}
\definecolor{yqlPurple}{HTML}{8346FF}

\usepackage[
colorlinks=true,
linkcolor=LinkColor,
citecolor=CiteColor,
urlcolor=URLColor,
linktoc=page
]{hyperref}

\usepackage{newpxmath}
\usepackage[nameinlink,noabbrev]{cleveref}

\theoremstyle{plain}
\newtheorem{theorem}{Theorem}
\newtheorem{lemma}[theorem]{Lemma}
\newtheorem{proposition}[theorem]{Proposition}

\theoremstyle{definition}

\theoremstyle{remark}

\title{Half Veto, Half Maximal Lottery, Five-halves Distortion}
\author{Qilin Ye\thanks{Stanford University, \href{mailto:yql@stanford.edu}{\texttt{yql@stanford.edu}}}}

\usepackage{color-edits}
\addauthor[Qilin]{qilin}{cyan!80}

\begin{document}

\maketitle

\begin{abstract}
    We give a randomized voting rule with expected metric distortion $5/2$, improving the previous best upper bound of $2.753$. Our rule is the equal mixture of a Maximal Lottery \cite{CRWW24} and a lottery obtained by averaging scores in Simultaneous Plurality Veto \cite{KK23} over time. We further show that $5/2$ is optimal within a broader family that allows profile-dependent mixing weights and arbitrary, possibly adaptive, weights over the veto process. 
\end{abstract}

\section{Introduction}

Voting is a versatile model of collective decision making, applying in a diverse range of settings, from constituents choosing from representatives, organizations choosing from job applicants, and friends choosing from activities. The most commonly studied form of voting is  the \emph{single-winner election}, where voters provide ranked preferences over candidates, and a \emph{voting rule} selects a winning candidate. Social choice theorists aim to design voting rules satisfying a variety of different objectives which are more or less satisfiable. The classical approach, which considers all-or-nothing binary properties (like \emph{strategy-proofness}), often finds that they are unsatisfiable by reasonable voting rules (for example, see Arrow's impossibility theorem \cite{arrow1950difficulty} and the Gibbard--Satterthwaite theorem \cite{gibbard1973manipulation,satterthwaite1975strategy}). 

An alternate strategy, common in computational social choice, is to instead consider \emph{quantitative} objectives and seek to satisfy them as well as possible. Along this vein, the \emph{metric distortion} framework \cite{DBLP:conf/aaai/AnshelevichBP15,DBLP:journals/ai/AnshelevichBEPS18} has emerged as a compelling approach. Motivated by well-studied \emph{spatial} models of social choice, it imagines that the voters and candidates share a common metric space, and each voter's preference over candidates is by increasing order of distance. The goal is to design a voting rule which always chooses an \emph{approximately optimal} candidate, whose average distance to the voters is within a small factor of the minimum possible. The \emph{distortion} of a voting rule is its worst-case approximation factor, over all elections and all metrics. 

It is easy to show that every deterministic voting rule has distortion at least 3 
\cite{DBLP:conf/aaai/AnshelevichBP15,DBLP:journals/ai/AnshelevichBEPS18}, and every randomized voting rule has (expected) distortion at least 2 \cite{DBLP:journals/jair/AnshelevichP17,DBLP:conf/sigecom/FeldmanFG16}. Following a long line of work \cite{DBLP:conf/sigecom/GoelKM17,DBLP:conf/ec/MunagalaW19,DBLP:conf/aaai/Kempe20a}, \cite{DBLP:conf/focs/GkatzelisHS20} proved the existence of a deterministic voting rule with optimal distortion 3. Not long after, two works of Kizilkaya and Kempe \cite{DBLP:conf/ijcai/KizilkayaK22,KK23} introduced natural and intuitive voting rules that can be shown to have distortion 3 with simple proofs. The first was \emph{Plurality Veto} \cite{DBLP:conf/ijcai/KizilkayaK22}, which works in the following way. Each candidate has a score, which is initially the total number of voters who rank her first.

Then, following an arbitrary order, each voter decrements (\emph{vetoes}) the score of their least favorite candidate with positive score. The last candidate with positive score (or equivalently, the candidate vetoed by the last voter) wins. While this voting rule is already an elegant resolution to the problem, it is peculiar that it is not \emph{anonymous}; the particular voter order influences the outcome. Addressing this issue, \cite{KK23} proposed another natural voting rule called \emph{Simultaneous Plurality Veto}, which also has distortion 3. In this voting rule, instead of having voters decrement scores iteratively in a discrete process, each candidate's score is reduced continuously, at a rate proportional to the fraction of voters who rank it last among those with positive score.

While the optimal distortion for deterministic rules is comfortably resolved, the picture for randomized rules remains much less complete. Early on, \cite{DBLP:journals/jair/AnshelevichP17,DBLP:conf/sigecom/FeldmanFG16} showed that \textit{Random Dictatorship}, a rule that chooses the favorite candidate of a random voter, has distortion $3$, leaving a gap with the simple lower bound of $2$. Over time, the lower bound was the first to move: two independent works \cite{DBLP:conf/soda/CharikarR22,pulyassary2021randomized} showed that distortion $2$ is impossible, with the strongest constructions approaching $2.1126$. However, improving over distortion 3 proved to be a stubborn challenge, in part because certain broad classes of voting rules were known to be unable to do any better. These include voting rules that only choose from voters' top choices \cite{DBLP:conf/aaai/GrossAX17} (like Plurality Veto and Simultaneous Plurality Veto), and \emph{tournament rules} which only use aggregate comparisons between pairs of candidates \cite{DBLP:conf/sigecom/GoelKM17} (the Copeland Rule and Borda Count are two well-known examples).

Finally, the first improvement to the upper bound came from \cite{CRWW24}, who achieved distortion $\approx 2.753$. Their voting rule mixes between Maximal Lotteries, a randomized tournament rule dating to the 60's \cite{kreweras1965aggregation} derived from the equilibrium of a simple zero-sum game, and another new voting rule which conceptually is a hybrid between Random Dictatorship and  weighted version of the Copeland rule introduced by \cite{DBLP:conf/ec/MunagalaW19}. While neither of these rules alone is sufficient to beat distortion 3, they used an intricate argument to show that the worst-case instances of one rule are the best-case instances of the other, and so a carefully designed mixture of the two performs well on every instance.

In this paper, we give a simpler rule and improve the upper bound to $5/2$. We keep the maximal lottery from \cite{CRWW24} and replace the other lottery with one based on the Simultaneous Plurality Veto process \cite{KK23} with a twist. Instead of selecting only the final survivor, we average each candidate's score over time to form what we call the \textit{Veto Lottery}. We prove that the equal[!!] mixture of a maximal lottery and the veto lottery has distortion at most $5 /2$. We also prove a matching lower bound for a broader family of such rules: even if the veto scores are weighted adaptively over time, and if the mixing probability (between veto and maximal lotteries) depends on the preference profile, no rule in this family can achieve distortion below $5/2$. Thus, our simple rule has distortion exactly $5/2$ and is optimal within this family.

\subsection{Concurrent Work}

In concurrent and independent work, Frank \cite{frank2026improvedboundrandomizedmetric} considers the same rule, called \textit{Mixed Integrated Veto}, and proves that its expected metric distortion is $5/2$. Frank takes a more practical approach by discretizing the veto process based on candidate elimination; in contrast, our analysis formulates the rule in continuous time. We additionally establish a broader impossibility result: even if the veto lottery is allowed to be computed adaptively, and even if the mixture between veto and maximal lottery can be profile dependent, no rule of this form can achieve distortion below $5/2$.

\section{Preliminaries}

An election instance consists of a finite set $C$ and a population of voters, which we normalize to mass $1$. Each voter $v$ reports a strict ranking $\succ_v$ of the candidates, where $i \succ_v j$ means that $v$ prefers $i$ to $j$. We call the collection of everyone's rankings a \textit{preference profile}.

Here, we are mainly concerned with \textit{randomized} voting rules, where we select a single winner based on certain distributions. Formally, we say a \textit{lottery} $D$ is a probability distribution over $C$, where $D(i)$ is the probability assigned to candidate $i$. A randomized voting rule $f$ then takes a preference profile $\sigma$ as input, and maps it to a lottery $D$.

\paragraph{Metric Distortion.} Let $(M,d)$ be a metric space defined on the set of voters and candidates. We say $d$ is \textit{consistent} with the profile if  $d(v,i) \leqslant d(v,j)$ whenever $i \succ_v j$. Now fix a consistent metric $d$. We define \textit{social cost} of a candidate $i$ to be $SC(i) = \mathbb{E}_v d(v,i)$, and the social cost of a lottery $D$ to be $SC(D) = \sum_{i\in C}^{} D(i) SC(i)$.

The \textit{distortion} of $D$ under this metric is the ratio between $SC(D)$ and $\min_{i\in C} SC(i)$, i.e., the lowest social cost among any candidate. Correspondingly, the distortion of a randomized voting rule $f$ is the worst-case guarantee of this ratio across all preference profiles, i.e., 
\[
		\mathrm{distortion} (f) = \sup_{\sigma} \sup_{d \text{ consistent with }\sigma} \frac{SC(f(\sigma))}{\min_{i\in C} SC(i)}.
\]

For disjoint sets $I,J \subset C$, let $s_{I\succ J}$ be the share of voters who rank every candidate in $I$ above every candidate in $J$. We omit braces around singleton sets.

\paragraph{Biased Metrics.} We import several key results from \cite{CRWW24}. Fix a profile and let candidate $i^*$ be one that minimizes the social cost. A \textit{biased metric} is described by nonnegative values $x_i \geqslant 0$, with $x_{i^*} = 0$. Roughly speaking, it places voters as close to $i^*$ and as far from the other candidates relative to $i^*$ as their rankings allow.

\begin{proposition}[\cite{CRWW24}, Proposition 13]{}
		To bound worst-case distortion, it is enough to consider biased metrics only.
\end{proposition}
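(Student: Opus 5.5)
The plan is to show that any consistent metric $d$ can be replaced by a biased metric that is still consistent with the same profile, has the same optimal candidate $i^*$, and has distortion at least as large for every lottery. We first fix the biased-metric construction concretely. Given the profile and values $x_i \geqslant 0$ with $x_{i^*}=0$, define for each voter $v$
\[
d'(v,i) = \max_{j \,:\, i \succeq_v j} x_j + \max_{j\,:\, j\succeq_v i^*}x_j ,
\]
and $d'(v,i^*)$ analogously. Here $x_j$ is read as an offset of candidate $j$ relative to $i^*$. The pairwise candidate distances are induced as shortest paths through voters, and we would check that this gives a valid metric. Consistency holds because $d'(v,\cdot)$ is monotone along $\succ_v$: the prefix maximum is nondecreasing.

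The reduction has two steps. First, normalize: starting from an arbitrary consistent $d$, set $x_i = SC(i) - SC(i^*)$, or more carefully a per-candidate quantity derived from $d$ that is nonnegative since $i^*$ is optimal. The distortion of a lottery $D$ is
\[
\frac{SC(D)}{SC(i^*)} = 1 + \frac{\sum_i D(i)\bigl(SC(i)-SC(i^*)\bigr)}{SC(i^*)}.
\]
The idea is to keep the numerator gaps at least as large while shrinking $SC(i^*)$. Second, compare pointwise. For each voter $v$ and candidate $j \preceq_v i$, the triangle inequality and consistency give
\[
d(v,i) - d(v,i^*) \geqslant \text{(something expressible via the } x\text{'s)}.
\]
Conversely, moving $v$ as close to $i^*$ as its ranking allows can only decrease $d(v,i^*)$, while the differences $d(v,i)-d(v,i^*)$ stay lower bounded by the prefix-maximum structure. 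Summing over voters yields $SC'(i^*) \leqslant SC(i^*)$ and $SC'(i)-SC'(i^*) \geqslant SC(i)-SC(i^*)$ for every $i$. Since the ratio increases when the denominator decreases and the numerator excess increases, the distortion under $d'$ is at least that under $d$, for every lottery simultaneously. Taking suprema gives the proposition.

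The main obstacle is the pointwise comparison. We have to choose the $x_i$ from $d$ so that both inequalities hold at once and $d'$ remains a genuine metric (triangle inequalities among candidates). The first attempt would be the linear-programming view: for a fixed profile and fixed $D$, distortion maximization is a linear-fractional program over consistent metrics. After normalizing $SC(i^*)=1$, this becomes an LP, and the claim amounts to showing that an optimal vertex has the biased form. Concretely, tight constraints force each voter to sit at distance $0$ or at a prefix-determined offset from $i^*$. If the direct construction fails, this LP-vertex argument is the fallback, following \cite{CRWW24}.
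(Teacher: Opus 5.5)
Your outer reduction is the right one. Build a biased $d'$ with $d'(v,i^*)\leqslant d(v,i^*)$ and $d'(v,j)-d'(v,i^*)\geqslant d(v,j)-d(v,i^*)$ for every voter $v$ and candidate $j$. Averaging over voters then keeps $i^*$ optimal, lowers $SC(i^*)$ and raises every excess $SC(j)-SC(i^*)$, so no lottery's distortion decreases.

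But the whole content of the proposition is choosing $x$ and $d'$ so that these two inequalities hold. The proposal leaves both open, and the concrete choices it does write down fail:
\begin{itemize}
\item As written, $\max_{j:\, i\succeq_v j}x_j$ ranges over the candidates that $v$ ranks at or \emph{below} $i$. It is therefore largest at $v$'s top choice, and your $d'$ violates consistency.
\item Read as a prefix maximum, it is consistent but still wrong. For $a\succ_v i^*\succ_v i$ with $x_a\geqslant x_i$ it gives $d'(v,i)=d'(v,i^*)$, so the gap comparison fails whenever $d(v,i)>d(v,i^*)$.
\item Your pointwise step points the wrong way. What you need from the triangle inequality is an \emph{upper} bound on $d(v,i)-d(v,i^*)$ in terms of $x$, not a lower bound.
\item The LP-vertex fallback only asserts that tight constraints produce the biased form, which is the statement to be proved.
\end{itemize}

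The missing idea is $x_j=d(i^*,j)$, the original distance from $j$ to the optimum. With this choice, consistency and the triangle inequality give two facts directly. First, for $a\succ_v b$,
\[
x_a-x_b\leqslant \bigl(d(v,i^*)+d(v,a)\bigr)-\bigl(d(v,b)-d(v,i^*)\bigr)\leqslant 2\,d(v,i^*).
\]
Second, for every $k$ that $v$ ranks at or below $j$, $d(v,j)\leqslant d(v,k)\leqslant d(v,i^*)+x_k$. These facts dictate the biased metric. You can also read it off by integrating $r(t)$ and $\ell(D,t)$ in \Cref{prop:distortion} one voter at a time:
\[
d'(v,i^*)=\tfrac12\max_{a\succeq_v b}(x_a-x_b),\qquad d'(v,j)=d'(v,i^*)+\min_{k:\,j\succeq_v k}x_k .
\]
The two facts are then literally $d'(v,i^*)\leqslant d(v,i^*)$ and $d'(v,j)-d'(v,i^*)\geqslant d(v,j)-d(v,i^*)$.

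Social-cost gaps $SC(i)-SC(i^*)$ would not work as $x$. For a voter located at $i^*$ you need $d'(v,i^*)=0$, i.e.\ $x$ weakly increasing down that voter's ranking, and social-cost gaps need not be.

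Note that the relevant object is a suffix \emph{minimum}, not a prefix maximum. It is nondecreasing down $v$'s ranking, which gives consistency, and it vanishes at $i^*$ since $x\geqslant 0=x_{i^*}$.

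It remains to check that $d'$ extends to a pseudometric by shortest paths, i.e.\ $d'(v,j)\leqslant d'(v,k)+d'(u,k)+d'(u,j)$ for all voters $u,v$ and candidates $j,k$. Write $\rho_w=d'(w,i^*)$. Then:
\begin{itemize}
\item $d'(v,k)\geqslant\rho_v$ and $d'(u,k)\geqslant\rho_u$, since $x\geqslant 0$;
\item $d'(u,j)\geqslant x_j-\rho_u$, because the minimizing $k'$ has $x_j-x_{k'}\leqslant 2\rho_u$;
\item $d'(v,j)\leqslant\rho_v+x_j$.
\end{itemize}
Adding these closes the argument, with no LP needed.
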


Now fix a biased metric. For $t\geqslant 0$, let $I_t = \{i\in C: x_i \leqslant  t\}$. Define 
\[
		r(t) = \mathbb{P}_{v} [\text{there are }a\succ_v b\text{ with }x_a - x_b > t], \qquad \ell(D,t) = \sum_{j \notin  I_t}^{} s_{I_t \succ j} D(j).
\]
Then the quantity $\ell(D,t)$ is linear in $D$. From this definition, if a voter ranks a candidate outside of $I_t$ above $i^*$, then they are counted by $r(t)$, so a handy inequality is $1 - s_{i^* \succ I_t^{c}} \leqslant  r(t)$. Later, our main argument is driven by the following statement: 

\begin{proposition}[\cite{CRWW24}, Section 3]{}
    \label{prop:distortion}
		For every lottery $D$, 
		\[
				2 SC(i^*) = \int_{0}^{\infty} r(t) \;\mathrm{d}t, \qquad SC(D) - SC(i^*) = \int_{0}^{\infty} \ell(D,t) \;\mathrm{d}t.
		\] 
		It follows that $D$ has distortion $1 + 2 \gamma$ whenever $\ell(D,t) \leqslant  \gamma \cdot r(t)$ for every $t\geqslant 0$ and every biased metric. 
\end{proposition}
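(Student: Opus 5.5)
The plan is to make the biased metric of \cite{CRWW24} explicit and then obtain both identities by writing each distance as an integral of an indicator over the threshold $t$ (a layer-cake argument), followed by Fubini over voters. For a biased metric given by $x_i \geqslant 0$ with $x_{i^*}=0$, I would use the following concrete description; it should be checked against the definition in \cite{CRWW24}. For each voter $v$,
\[
d(v,i^*) = \tfrac12 \max\Bigl(0,\ \max_{a \succ_v b} (x_a - x_b)\Bigr), \qquad d(v,j) = d(v,i^*) + \min_{k \,:\, j \succeq_v k} x_k ,
\]
where $j \succeq_v k$ means $k=j$ or $k$ is ranked below $j$. The first step is to confirm that this is a legitimate metric consistent with $\sigma$. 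Consistency is immediate: if $i \succ_v j$, the minimum defining $d(v,j)$ ranges over a subset of the candidates used for $d(v,i)$, so $d(v,i) \leqslant d(v,j)$. For the metric property, I would take candidate–candidate distances to be the induced shortest-path distances. The only real content is that these shortest paths do not shortcut the voter–candidate distances above. This is exactly where the factor $\tfrac12$ in $d(v,i^*)$ is needed: it absorbs the inversions $x_a > x_b$ with $a \succ_v b$. I would simply cite \cite{CRWW24} for this verification, together with Proposition 13 for the reduction to biased metrics.

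Next, both identities follow pointwise in $v$ from $y = \int_0^\infty \mathbf{1}[t<y]\,\mathrm{d}t$ for $y \geqslant 0$. For the first identity,
\[
2d(v,i^*) = \int_0^\infty \mathbf{1}\bigl[\exists\, a \succ_v b \text{ with } x_a - x_b > t\bigr]\,\mathrm{d}t .
\]
Averaging over $v$ and exchanging the integrals by Tonelli gives $2SC(i^*) = \int_0^\infty r(t)\,\mathrm{d}t$.

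For the second identity, fix $j$ and $t$, and write $m_v(j) = \min_{j \succeq_v k} x_k$. Then $t < m_v(j)$ holds exactly when no candidate ranked weakly below $j$ lies in $I_t$. For $j \notin I_t$, this says that every member of $I_t$ is ranked above $j$. For $j \in I_t$, it can never hold, since $x_j \leqslant t$. Hence
\[
d(v,j) - d(v,i^*) = \int_0^\infty \mathbf{1}\bigl[j \notin I_t,\ I_t \succ_v j\bigr]\,\mathrm{d}t .
\]
Averaging over $v$ gives $SC(j) - SC(i^*) = \int_0^\infty \mathbf{1}[j\notin I_t]\, s_{I_t \succ j}\,\mathrm{d}t$. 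Multiplying by $D(j)$ and summing over $j$ (a finite sum, so it commutes with the integral) yields $SC(D)-SC(i^*) = \int_0^\infty \ell(D,t)\,\mathrm{d}t$. Linearity of $\ell$ in $D$ is evident from its definition.

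Finally, suppose $\ell(D,t) \leqslant \gamma\, r(t)$ for all $t$. Integrating and using the two identities gives $SC(D) - SC(i^*) \leqslant 2\gamma\, SC(i^*)$, so $SC(D)/SC(i^*) \leqslant 1+2\gamma$ on every biased metric. By Proposition 13 this bound extends to all consistent metrics. The main obstacle is the first step: pinning down the exact biased-metric construction so that the indicator for $r(t)$ and the indicator for $\ell$ match the stated formulas, and checking the triangle inequality. If my guessed form turns out to differ from the one in \cite{CRWW24}, I would adjust $d(v,\cdot)$ until the two layer-cake indicators agree with those formulas, since everything after that step is mechanical.
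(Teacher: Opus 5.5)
Your proposal is correct. The paper gives no proof of its own and imports these identities from \cite{CRWW24}; your route is the standard derivation of them: the explicit biased metric $d(v,i^*)=\tfrac12\max\{0,\max_{a\succ_v b}(x_a-x_b)\}$, $d(v,j)=d(v,i^*)+\min_{k\preceq_v j}x_k$, followed by the layer-cake/Tonelli computation. Your guess for the metric is the right one: it is exactly the metric that puts $v$ as close to $i^*$, and as far from the other candidates, as her ranking allows. The step you defer is a one-line check: with $h_v=d(v,i^*)$ and $m_v(j)=d(v,j)-h_v$, one has $m_v(j)\leqslant x_j\leqslant m_w(j)+2h_w$ and $m_v(a),m_w(a)\geqslant 0$, hence $d(v,j)\leqslant d(v,a)+d(w,a)+d(w,j)$ for all voters $v,w$ and candidates $a,j$, and induction on path length shows the shortest-path closure never shortcuts a voter--candidate distance.
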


\paragraph{Maximal Lotteries.} Following the convention of \cite{CRWW24}, we set $s_{i\succ i} = 1 /2$. For two lotteries $D$ and $Q$, define
\[
		s_{D\succ Q} = \sum_{i,j\in C}^{} D(i) Q(j) s_{i\succ j}.
\] 
We say a lottery $D_{\mathrm{ML} }$ is a \textit{maximal lottery} (ML) if $s_{Q\succ D_{\mathrm{ML} }} \leqslant 1 /2$ for every lottery $Q$. Intuitively, this states that if one candidate is drawn from each lottery, then a random voter would prefer the candidate drawn from $Q$ with probability at most $1 /2$. ML consists of ``half'' of our rule, and we will assume the following guarantee:  

\begin{proposition}[\cite{CRWW24}, Theorem 1]{}
    \label{prop:ml-ineq}
		For every biased metric and every $t\geqslant 0$, 
		\[
				\ell(D_{\mathrm{ML} }, t) \leqslant  \min_{} \{r(t), 1 /2\}.
		\] 
\end{proposition}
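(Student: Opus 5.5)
The bound $1/2$ will come directly from the maximal-lottery inequality against a point mass. The bound $r(t)$ will come from the same inequality against a lottery that shifts $D_{\mathrm{ML}}$'s mass from outside $I_t$ onto $I_t$, together with per-voter accounting in which every voter who breaks the $I_t$-versus-$I_t^c$ pattern is charged to $r(t)$. Throughout, write $D=D_{\mathrm{ML}}$, fix $t$, and put $I=I_t$. Since $x_{i^*}=0\le t$, we have $i^*\in I$, and every $j\notin I$ has $x_j>t$.

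\textbf{The bound $1/2$.} Take $Q$ to be the point mass on $i^*$. Maximality gives $s_{i^*\succ D}=\sum_j D(j)s_{i^*\succ j}\le 1/2$, and all terms are nonnegative, including $s_{i^*\succ i^*}=1/2$. For $j\notin I$, any voter ranking all of $I$ above $j$ ranks $i^*$ above $j$, so $s_{I\succ j}\le s_{i^*\succ j}$. Hence
\[
\ell(D,t)\le \sum_{j\notin I}D(j)s_{i^*\succ j}\le s_{i^*\succ D}\le \tfrac12 .
\]
This part is routine.

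\textbf{The bound $r(t)$.} Let $p=D(I^c)$; if $p=0$ then $\ell=0$, so assume $p>0$. Decompose $D=(1-p)D_I+pR$, where $D_I$ and $R$ are $D$ conditioned on $I$ and on $I^c$ respectively. Testing maximality against a lottery $Q$ and using $s_{D\succ D}=1/2$ turns $s_{Q\succ D}\le 1/2$ into $s_{Q\succ R}\le s_{D\succ R}$ whenever $Q$ differs from $D$ only by moving the $R$-part. I will choose $Q=(1-p)D_I+p\,\delta_{i^*}$, which moves all mass outside $I$ onto $i^*$. Maximality then yields
\[
\sum_k D(k)\bigl(s_{i^*\succ k}-s_{R\succ k}\bigr)\le 0 .
\]

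I then expand this per voter $v$. The contribution is $\sum_{j\notin I}R(j)\,\bigl(g_v(i^*)-g_v(j)\bigr)$, where $g_v(c)$ is the $D$-mass that $v$ ranks below $c$, with the usual half weight for ties at $c$ itself. I split the voters into two classes.
\begin{enumerate}[label=(\roman*)]
\item A voter $v$ who ranks some $j\notin I$ above $i^*$ satisfies $x_j-x_{i^*}>t$, so $v$ is counted by $r(t)$. Each such voter contributes at least $-1$.
\item A voter with $i^*\succ_v I^c$ has $g_v(i^*)-g_v(j)\ge 0$ for every $j\notin I$. For $j$ with $I\succ_v j$ the gain should be at least the $D$-mass that lies between $i^*$ and $j$ in $v$'s ranking, plus the half-weights at $i^*$ and $j$.
\end{enumerate}
Summing the two classes and rearranging should give an inequality of the form $\ell(D,t)\le p\cdot(\text{mass of voters counted by }r(t))\le r(t)$. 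As a sanity check, the same decomposition applied with $Q=D_I$ gives $s_{D_I\succ R}\le 1/2$, and hence the weaker bound $\ell\le p/2$ for free.

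\textbf{Main obstacle.} The hard part is this per-voter charging. The positive gain from a voter who ranks all of $I$ above $j$ is only a partial $D$-mass, not a full unit, so the deviation has to be chosen so that each unit of $\ell$ is matched against a voter in $r(t)$. If the point mass $\delta_{i^*}$ proves too coarse, I would instead use a voter-dependent comparison: replace $j$ by the $I$-candidate that $v$ ranks lowest, and pass to a fixed lottery $Q$ by averaging these replacements over voters.

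I would also try one more alternative. The cruder inequality $1-s_{i^*\succ I^c}\le r(t)$ yields $\ell\le \sum_{j\notin I}D(j)\,s_{i^*\succ j}$ and, through the deviation to $i^*$, $\sum_{j\ne i^*}D(j)s_{j\succ i^*}\ge (1-D(i^*))/2$. Combining these two estimates, while controlling the mass of $D$ on $I\setminus\{i^*\}$, may suffice to close the argument.
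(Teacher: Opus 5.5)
Your bound $\ell(D_{\mathrm{ML}},t)\le 1/2$ is correct and complete. The paper gives no proof of its own; it imports the whole statement from \cite{CRWW24}.

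The $r(t)$ half is not proved. The charging is only asserted (``should give'', ``may suffice''), and the inequality it aims at, $\ell(D,t)\le p\cdot r(t)$, is false for maximal lotteries. Fix $t>0$ and take candidates $i^*,i_1,j$ with $x$-values $0,\,t,\,3t/2$. Take voters $i^*\succ j\succ i_1$ (mass $\tfrac12-r$), $i_1\succ i^*\succ j$ (mass $\tfrac12$) and $j\succ i_1\succ i^*$ (mass $r<\tfrac12$). Only the last group is counted by $r(t)$, and $I_t=\{i^*,i_1\}$. For $D=(1-2r)\delta_{i_1}+2r\,\delta_j$ one checks $s_{k\succ D}=\tfrac12$ for all three $k$. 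So $D$ is a maximal lottery with $p=2r$ and $\ell(D,t)=2r\cdot\tfrac12=r>2r^2=p\,r(t)$. (Moving mass $\varepsilon$ from the second group to the first makes the maximal lottery unique, with $p\to 2r$ and $\ell\to r$ as $\varepsilon\to 0$.) So no per-voter accounting can reach your target. Since this instance has $\ell=r(t)$, whatever replaces it must be exact here.

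The missing observation is that your own two ingredients finish the proof once the intermediate target is $p=D(I_t^c)\le 2r(t)$.
In your inequality $\mathbb{E}_v\bigl[g_v(i^*)-\sum_{j\notin I}R(j)g_v(j)\bigr]\le 0$, average over $j\sim R$ before bounding. Use the identity $\sum_{j\notin I}D(j)\bigl[D(\{j'\notin I: j\succ_v j'\})+\tfrac12 D(j)\bigr]=\tfrac{p^2}{2}$, which holds for every voter.
If $v$ is not counted by $r(t)$, all of $I^c$ lies below $i^*$. Then $g_v(i^*)\ge p+D(\{i\in I: i^*\succ_v i\})$, while $\sum_j R(j)g_v(j)\le D(\{i\in I: i^*\succ_v i\})+\tfrac p2$, so the gain is at least $p/2$.
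For any voter, $\sum_j R(j)g_v(j)\le (1-p)+\tfrac p2$, so the loss is at most $1-\tfrac p2$, not $1$.
Hence $0\ge (1-r(t))\tfrac p2-r(t)\bigl(1-\tfrac p2\bigr)=\tfrac p2-r(t)$, that is, $p\le 2r(t)$. Your sanity check $\ell(D,t)\le p\,s_{D_I\succ R}\le p/2$ then gives $\ell(D,t)\le r(t)$ (for $p=1$, use $\ell\le\tfrac12\le r(t)$).
With the loss bound $1$ you would only reach $\ell\le r/(1-r)$. The voter-dependent replacement you sketch is not needed. Averaging voter-specific replacements into one fixed $Q$ would also not preserve the per-voter comparisons.
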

\section{In Search of a Good Rule to Accompany Maximal Lottery}

In their breakthrough, \cite{CRWW24} pairs a maximal lottery with a so-called ``random dictatorship on the $\beta$-weighted uncovered set.'' 
The parameter $\beta$ is sampled from a carefully tuned interval, and the resulting lottery is mixed with ML using an adaptive weight. These choices are highly effective in the analysis (and in achieving low distortion), but their purpose is far from easy to understand intuitively.

We would like to know whether we can replace the latter with a lottery that is simpler, whose role is easier to understand but still synergizes well with maximal lottery. This brings us back to the inequality $\ell (D_{\mathrm{ML}}, t) \leqslant \min \{r(t), 1/2\}$ in \Cref{prop:ml-ineq}.

By \Cref{prop:distortion}, we wish to make $\ell (D_{\mathrm{ML}}, t)$ small relative to $r(t)$. When $r(t)$ is large, the fixed $1/2$ becomes quite helpful, but when $r(t)$ is small, the inequality reduces to $\ell (D_{\mathrm{ML}}, t) \leqslant r(t)$, so the $1/2$ no longer helps. Thus, we want a second rule that becomes more effective when $r(t)$ is small.

By definition, for all but a share of $r(t)$ voters, every preference of form $a\succ_v b$ satisfies $x_a \leqslant x_b + t$. Intuitively, this means that when $r(t)$ is small, most voters agree on the rough direction of which candidates are better and which are worse: they may disagree about candidates with similar $x$-values, but they rarely rank a candidate with a much larger $x$-value above one with a much smaller $x$-value. Taking this observation to the extreme, it is not surprising that candidates that are \textit{much} farther from the optimum tend to be ranked below candidates that are \textit{much} closer. Consequently, we want the second rule to exploit this information, and this is where veto, a rule that specifically queries \textit{both} top- and bottom-choices, comes in. Importantly, when many voters agree that certain candidates are poor choices, their vetoes reinforce one another and get these bad candidates removed. This gives us a solid reason to try combining ML with veto.\footnotemark

Once we settle on veto, the \textit{Simultaneous Plurality Veto} process \cite{KK23} is a natural starting point. A simple way to turn their process into a lottery is to average each candidate's ``score'' over time, which we detail below.

\footnotetext{On a personal note, while I was an undergraduate at USC, Professor David Kempe was one of the first CS theoreticians I got to know, and his work on veto-based voting rules was among the first TCS research I read. Veto has therefore remained one of the first ideas I try in new settings --- sometimes successfully, but most of the time not.}

\section{Veto Lottery $+$ Maximal Lottery $=$ Distortion $5 /2$}

We first define the other ``half'' of our rule, which we call the \textit{Veto Lottery}. We draw inspiration from the simultaneous veto process of \cite{KK23}. 

Formally, let $\mathrm{plu} (i)$ be the share of voters who rank candidate $i$ as their first choice, and let $\mathrm{plu} (I) = \sum_{i\in I}^{} \mathrm{plu} (i)$. We start by giving each candidate $i$ a score $z_i(0) = \mathrm{plu} (i)$ and decrease the candidates' scores over time. We say a candidate is \textit{active} while her score is positive. At each time $t$, every voter lowers the score of her least favorite active candidate at unit rate. If $\delta_i(t)$ is the share of voters lowering candidate $i$ at time $t$, then $z_i'(t) = -\delta_i(t)$ while $i$ is active. Since the total score falls at rate $1$, we have 
\[
		\sum_{i\in C}^{} z_i(t) = 1-t, \qquad 0\leqslant t\leqslant 1.
\]

We define the \textbf{veto lottery} $D_V$ by averaging these scores over time:
\[
		D_V(i) = 2 \int_{0}^{1} z_i(t) \;\mathrm{d}t.
\] 
Observe that $D_V(i) \geqslant 0$, and $\sum_{i\in C}^{} D_V(i) = 2 \int_{0}^{1} (1-t) \;\mathrm{d}t = 1$, so indeed, $D_V$ is a lottery.
\vspace{5pt}

We are now ready to state the rule: let $D_{\mathrm{ML} }$ be any maximal lottery, and let $D_V$ be the veto lottery. Our rule simply concerns $D^* = (D_V + D_{\mathrm{ML} }) /2$.

\begin{theorem}{}{}
		\label{thm:ub}
		The distortion of $D^*$ is at most $5 /2$.
\end{theorem}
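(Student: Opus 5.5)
By \Cref{prop:distortion}, distortion $5/2 = 1+2\gamma$ corresponds to $\gamma = 3/4$. So I will show, for every biased metric and every $t \geqslant 0$,
\[
\ell(D_V,t) + \ell(D_{\mathrm{ML}},t) \leqslant \tfrac32\, r(t).
\]
Write $r = r(t)$. By \Cref{prop:ml-ineq}, $\ell(D_{\mathrm{ML}},t) \leqslant \min\{r, 1/2\}$. It therefore suffices to prove
\[
\ell(D_V,t) \leqslant \max\{r/2,\ 3r/2 - 1/2\}.
\]
A single clean target implies this in both regimes: the inequality $\ell(D_V,t) \leqslant r(t)^2$.
\begin{itemize}
\item If $r \leqslant 1/2$, then $r^2 \leqslant r/2$.
\item If $r \in [1/2,1]$, then $r^2 \leqslant 3r/2 - 1/2$, which is equivalent to $(2r-1)(r-1) \leqslant 0$.
\end{itemize}
The main work is thus the lemma $\ell(D_V,t) \leqslant r(t)^2$. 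This quadratic shape is natural for a time-averaged score: mass of order $r$ that is eliminated within time of order $r$ gives integral of order $r^2/2$, and the factor $2$ in $D_V$ then gives $r^2$.

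To prove the lemma, fix $t$ and let $J = C \setminus I_t$. Let $B$ be the set of voters with some pair $a \succ_v b$ such that $x_a - x_b > t$, so $|B| = r$. First, I bound the initial score of $J$. Every voter who ranks some $j \in J$ first ranks it above $i^*$, and $x_j - x_{i^*} = x_j > t$. Hence $\mathrm{plu}(J) \leqslant 1 - s_{i^* \succ J} \leqslant r$.

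Second, I track the weighted potential
\[
\Phi(\tau) = \sum_{j \in J} s_{I_t \succ j}\, z_j(\tau),
\]
so that $\ell(D_V,t) = 2\int_0^1 \Phi(\tau)\,\mathrm{d}\tau$. Each $z_j(\tau) \leqslant z_j(0)$, and any voter who ranks $j$ first is not among those ranking $I_t$ above $j$. From this I aim to show $\Phi(0) \leqslant r(1-r)$ or a similar bound; more precisely, $\Phi(0) \leqslant \sum_j s_{I_t\succ j}\,\mathrm{plu}(j)$ with $s_{I_t\succ j} \leqslant 1 - \mathrm{plu}(j)$.

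Third, I will argue that $\Phi$ must decay fast. While some $j$ with $s_{I_t\succ j} > 0$ is active, every voter ranking all of $I_t$ above $j$ has her least favorite active candidate ranked at or below $j$, hence in $J$. So she is draining score from $J$ at unit rate.

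The hard step is to turn this into a bound on $\int \Phi$ that is quadratic in $r$. The naive version, where $J$-score starts at most $r$ and drains at rate at least $1-r$, only gives the bound $r^2/(1-r)$, which is too weak as $r \to 1$. What I would try first is to avoid treating $J$ as a block and instead use the weights: a candidate $j$ with large $s_{I_t\succ j}$ is vetoed by many voters (at least those counted in $s_{I_t\succ j}$), so it dies quickly. Concretely, order the candidates of $J$ by the time they become inactive. Then charge the veto flow of the voters counted in $s_{I_t\succ j}$, which is spent on candidates at or below $j$, against $\Phi$. I would then bound the integral by $\int_0^{\infty} \min\{\Phi(0),\ \text{remaining mass}\}$ in the style of a greedy exchange argument, showing that the worst case is one bad candidate with plurality $r$ vetoed by exactly $1-r$ voters. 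In that case $\ell(D_V,t) = 2(1-r)\cdot\frac{r^2}{2(1-r)} = r^2$, which matches the bound exactly. Verifying that this single-candidate configuration is extremal, so that splitting bad score among several $J$ candidates or involving $I_t$-candidates in the veto order cannot help the adversary, is the step I expect to be the main obstacle.
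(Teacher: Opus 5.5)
Your top-level reduction is correct and is the paper's: with $\gamma=3/4$ and $\ell(D_{\mathrm{ML}},t)\leqslant\min\{r,1/2\}$, everything reduces to $\ell(D_V,t)\leqslant r(t)^2$, and your bound $\mathrm{plu}(J)\leqslant 1-s_{i^*\succ J}\leqslant r$ is the one used there. The gap is that you never prove this lemma. The greedy-exchange and extremality plan is left unverified (you flag it yourself as the main obstacle), and it is not needed. Your ``naive'' drain rate is also unjustified. A voter outside $B$ may rank some $j\in J$ above some $i\in I_t$ when $x_j-x_i\leqslant t$, so her veto need not land in $J$, and the total veto rate on $J$ is not obviously at least $1-r$.

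The missing idea is to use your own third-step observation pointwise, not through a constant rate. That observation says: if $z_j(\tau)>0$, then $s_{I_t\succ j}\leqslant\delta_J(\tau)$, where $\delta_J(\tau)=\sum_{j\in J}\delta_j(\tau)$ is the current total veto rate on $J$. Inactive $j$ contribute $z_j(\tau)=0$, and $z_J'=-\delta_J$. Hence
\[
\Phi(\tau)=\sum_{j\in J}s_{I_t\succ j}\,z_j(\tau)\leqslant\delta_J(\tau)\,z_J(\tau)=-\tfrac12\,\tfrac{\mathrm{d}}{\mathrm{d}\tau}\,z_J(\tau)^2 .
\]
Integrating gives $\ell(D_V,t)=2\int_0^1\Phi(\tau)\,\mathrm{d}\tau\leqslant z_J(0)^2=\mathrm{plu}(J)^2\leqslant r^2$. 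The weights are absorbed into the drain rate, so the integral telescopes exactly. It does not matter how the bad score is split across $J$ or how the $I_t$ candidates behave, so no extremal configuration has to be identified. Your single-candidate example only shows that this telescoped bound is tight.
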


We first prove a useful lemma. 

\begin{lemma}{}{}
		\label{lem:ub}
		For every nonempty proper subset $I \subset C$, with $J = I^{c}$, 
		\[
				\sum_{j\in J}^{} s_{I \succ j}D_V(j) \leqslant \mathrm{plu} (J)^2.
		\] 
\end{lemma}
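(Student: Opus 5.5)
The plan is to track the total score of $J$ over time in the veto process and show that the left-hand side is dominated by the rate at which this total score decreases. Let $Z(t) = \sum_{j\in J} z_j(t)$, so that $Z(0) = \mathrm{plu}(J)$. Let $w(t)$ be the share of voters whose least favorite \emph{active} candidate at time $t$ lies in $J$. Every voter lowers exactly one active candidate at unit rate, so $Z'(t) = -\sum_{j \in J \text{ active}} \delta_j(t) = -w(t)$ for almost every $t\in[0,1]$. Moreover $Z$ is nonincreasing and absolutely continuous, since each $z_j$ is Lipschitz.

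The key pointwise claim is that for every $t$,
\[
\sum_{j\in J} s_{I\succ j}\, z_j(t) \leqslant w(t)\, Z(t).
\]
\begin{itemize}[leftmargin=*]
\item Inactive $j$ contribute nothing, since $z_j(t)=0$ for them.
\item Fix an active $j\in J$ and a voter who ranks every candidate of $I$ above $j$. This voter's least favorite active candidate is ranked at or below $j$. Every candidate ranked below $j$ lies outside $I$, so that least favorite active candidate lies in $J$. Hence the voter is counted in $w(t)$, giving $s_{I\succ j} \leqslant w(t)$ for every active $j$.
\end{itemize}
Summing over active $j$ with weights $z_j(t)$ gives the claim.

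Integrating the claim, and using the definition $D_V(j) = 2\int_0^1 z_j(t)\,\mathrm{d}t$,
\[
\sum_{j\in J} s_{I\succ j} D_V(j) \leqslant 2\int_0^1 w(t) Z(t)\,\mathrm{d}t = -\int_0^1 2Z(t)Z'(t)\,\mathrm{d}t = Z(0)^2 - Z(1)^2 \leqslant \mathrm{plu}(J)^2.
\]

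The only delicate points are regularity and the boundary case. For regularity, the process is piecewise linear with finitely many events where candidates become inactive, so the chain rule for $Z^2$ is valid. For the boundary case, candidates with zero initial score are never active; they contribute nothing on either side, and the argument above handles them automatically. I expect the main conceptual step to be the observation that each voter counted in $s_{I\succ j}$ for an active $j$ is necessarily currently vetoing inside $J$. With that observation, the quadratic bound follows from the identity $-2ZZ' = (Z^2)'$ and requires no further combinatorics.
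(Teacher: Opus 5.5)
Your proof is correct and follows essentially the same route as the paper. Both bound $s_{I\succ j}\leqslant \delta_J(t)$ (your $w(t)$) for each active $j\in J$, by noting that such a voter's least favorite active candidate must lie in $J$, and then integrate $2\,\delta_J(t)\,z_J(t) = -\tfrac{\mathrm{d}}{\mathrm{d}t} z_J(t)^2$ to get $\mathrm{plu}(J)^2$. The only difference is that the paper writes this as an equality, using $z_J(1)=0$, where you drop the $Z(1)^2$ term as an inequality.
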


\begin{proof}{}{}
		We first extend the notation of $z$ and $\delta$ to define 
		\[
				z_J(t) = \sum_{j\in J}^{} z_j(t), \qquad \delta_J(t) = \sum_{j\in J}^{} \delta_j(t).
		\] 

		Fix any moment in the process, and suppose $j\in J$ is active. Every voter counted by counted by $s_{I\succ j}$ ranks every candidate in $I$ above $j$. Since $j$ is itself active,such a voter's least preferred active candidate cannot lie in $I$, and hence must lie in $J$. 
        Hence $s_{I\succ j} \leqslant \delta_J(t)$ whenever $z_j(t) > 0$. Also, $z'_J(t) = -\delta_J(t)$. Therefore, 
		\begin{align*}
				\sum_{j\in J}^{} s_{I\succ j}D_V(j) &= 2 \int_{0}^{1} \sum_{j\in J}^{} s_{I\succ j} \cdot z_j(t) \;\mathrm{d}t \\
				&\leqslant  2 \int_{0}^{1} \delta_J(t) \cdot z_J(t) \;\mathrm{d}t \\
				&= - \int_{0}^{1} \frac{\mathrm{d}}{\mathrm{d}t} z_J(t)^2 \;\mathrm{d}t \\
				&=  z_J(0)^2 = \mathrm{plu} (J)^2. \qedhere
		\end{align*}
\end{proof}

\begin{proof}[Proof of \Cref{thm:ub}]{}
		We now conclude the  main proof. Fix $t\geqslant 0$. Since the optimal candidate $i^*$ lies in $I_t$, 
		\[
				\mathrm{plu} (I_t^{c}) \leqslant 1 - s_{i^*\succ I_t^{c}} \leqslant r(t).
		\] 
		If $I_t = C$, then $\ell(D_V, t) = 0$. Otherwise, applying \Cref{lem:ub} with $I = I_t$ gives
		\[
				\ell(D_V, t) \leqslant \mathrm{plu} (I_t^{c})^2 \leqslant r(t)^2.
		\] 

		On the other hand, for the maximal lottery, recall from \Cref{prop:ml-ineq} that $\ell(D_{\mathrm{ML} }, t) \leqslant \min_{} \{r(t), 1 /2\}$. Since $\ell$ is linear in the lottery, 
		\[
				\ell(D^*, t) \leqslant \frac{r(t)^2 + \min_{} \{r(t), 1 /2\}}{2}.
		\] 
		For every $r\in [0,1]$, we have $(r^2 + \min_{} \{r, 1 /2\}) /2 \leqslant 3r /4$. Therefore, $\ell(D^*, t) \leqslant 3 /4 \cdot r(t)$ for every $t$, and so $\mathrm{distortion} (D^*) \leqslant 1 + 2 \cdot 3 /4 = 5 /2$, as claimed.
\end{proof}

\section{Veto Lottery $+$ Maximal Lottery Cannot Do Better}

So far, the rule we have considered is surprisingly simple: we take a simple average between the veto lottery and a maximal lottery, and even the veto lottery itself is obtained from a simple average over time. This naturally raises the question of whether more careful choices can improve the $5 /2$ guarantee.

In this section, we answer that in the negative. Specifically, we allow two things to vary:

\begin{enumerate}[label=(\arabic*),align=left]
		\item \textit{The mixture weight}. We choose any $\lambda\in [0,1]$, possibly profile-dependent, and mix the veto lottery with a maximal lottery with weights $\lambda$ and $1-\lambda$, respectively.
		\item \textit{The veto time weights}. At each time $t<1$, we normalize the veto scores to $P_t(i) = z_i(t) / (1-t)$. For each preference profile, we choose any (possibly adaptive) probability measure $\mu$ on $[0,1)$ and define
				\[
						V(i) = \int_{[0,1)}^{} P_t(i) \;\mathrm{d}\mu(t).
				\] 
\end{enumerate}
Given these choices, define 
\[
		D_{\lambda, V} = \lambda V + (1-\lambda) D_{\mathrm{ML}}.
\] 
It is easy to verify that our original rule is the canonical case with $\lambda = 1 /2$ and $\mathrm{d} \mu(t) = 2(1-t) \mathrm{d} t$, since in this case
\[
		V(i) = \int_{0}^{1} \frac{z_i(t)}{1-t} \cdot 2(1-t) \;\mathrm{d}t = 2 \int_{0}^{1} z_i(t) \;\mathrm{d}t = D_V(i).
\] 
We will show that even with these extra choices, the worst-case distortion cannot be improved.

\begin{theorem}{}{}
		\label{thm:lb}
		For every choice of $V$ and $\lambda\in [0,1]$, even if they are profile-dependent/adaptive, the distortion of $D_{\lambda, V}$ is at least $5 /2$. Consequently, the mixture in its simplest form--simple average and time-averaged veto--cannot be improved. 
\end{theorem}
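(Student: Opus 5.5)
The plan is to exploit the order of quantifiers. $\lambda$ and $\mu$ may depend on the profile, but the consistent metric is chosen adversarially \emph{after} the profile is fixed. It therefore suffices to exhibit a single preference profile $\sigma$, together with two consistent metrics $d_A$ and $d_B$, such that for \emph{every} probability measure $\mu$ on $[0,1)$ the lottery $V$ has distortion (approaching) $3$ under $d_A$ and $2$ under $d_B$, while every maximal lottery has distortion $2$ under $d_A$ and $3$ under $d_B$.

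Social cost is linear in the lottery. So for the $\lambda$ chosen on $\sigma$, the distortion of $D_{\lambda,V}$ is at least
\[
\max\{3\lambda+2(1-\lambda),\; 2\lambda+3(1-\lambda)\}\geqslant 5/2,
\]
with equality only at $\lambda=1/2$. This is exactly the tightness pattern of the proof of \Cref{thm:ub}: at $r(t)=1/2$ the veto bound $r^2=r/2$ and the ML bound $r$ average to $3r/4$. That is why I expect a profile whose relevant threshold sits at $r\approx 1/2$ to be the right target. Weaker numbers would also suffice, as long as the two metrics give complementary distortions whose $\lambda$-balanced mixture is $5/2$ in the limit.

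\textbf{Step 1: neutralize $\mu$.} I will design $\sigma$ so that $P_t$ does not depend on $t$. This holds when, at every moment, the veto rate $\delta_i(t)$ of each active candidate is proportional to its score $z_i(t)$. Then every $z_i(t)=(1-t)\,\mathrm{plu}(i)$, so $V=\mathrm{plu}$ for every $\mu$, and adaptivity of $\mu$ is useless. A natural way to get this is a symmetric (cyclic) profile. The candidates are split into groups; within a group each voter's top choice is spread evenly, and the bottom choices are arranged so that each candidate receives vetoes in proportion to its plurality share. Since the ratios are preserved, all candidates are eliminated simultaneously at $t=1$.

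\textbf{Step 2: pin down the maximal lottery.} I will make the pairwise majority structure such that the maximal lottery is forced, ideally by a Condorcet winner $w$ with $s_{w\succ j}$ just above $1/2$ for all $j$, and place essentially no plurality mass on $w$. Then $V$ and $D_{\mathrm{ML}}$ have disjoint supports, and the two metrics can separately punish each support.

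\textbf{Step 3: build the two metrics.} For $d_B$, I will take a metric in which $w$ is the "barely Condorcet" candidate. This is the standard distortion-$3$ construction: half of the voters sit at the optimal candidate $o$, and the other half sit at the midpoint-type position that still ranks $w$ above $o$. In $d_B$ the plurality candidates will be placed so that their average cost is about $2\,SC(o)$. For $d_A$, I will reverse the roles, using a random-dictatorship-style bad instance. Each plurality candidate is placed far from the optimum except for its own small group of first-place voters, giving distortion about $3$ for $V=\mathrm{plu}$. Meanwhile $w$, which is nobody's favorite but ranked uniformly high, is near the optimum, at ratio about $2$. Both metrics should be realized as biased metrics so that I can verify them with \Cref{prop:distortion}: I compute $r(t)$ and $\ell(\cdot,t)$ and check $\ell(V,t)\approx r(t)$ and $\ell(D_{\mathrm{ML}},t)\approx r(t)/2$ in $d_A$, with the roles swapped in $d_B$. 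The $\approx$ will be obtained by letting the number of candidates or groups tend to infinity.

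\textbf{Main obstacle.} The hard step is making Steps 1–3 compatible. The profile must simultaneously (i) make vetoes proportional to scores at all times, (ii) admit a Condorcet winner carrying no plurality mass, and (iii) be consistent with both $d_A$ and $d_B$, which require different positions of $w$ relative to the plurality candidates. If exact proportional vetoing conflicts with (iii), my fallback is to relax Step 1. Instead of $V=\mathrm{plu}$, it suffices that every $P_t$ has distortion at least $3-\varepsilon$ under $d_A$ and at most $2+\varepsilon$ under $d_B$. This holds, for instance, if throughout the process the surviving scores stay concentrated on candidates that are uniformly bad in $d_A$. Then $V$, being a convex combination of the $P_t$, inherits the bounds for every $\mu$.
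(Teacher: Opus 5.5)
Your architecture matches the paper's. It uses a single profile family with a strict Condorcet winner of zero plurality, so the maximal lottery is forced. It uses two consistent metrics on which ML and veto swap roles ($\approx 2$ versus $\approx 3$). It neutralizes $\mu$ by controlling every $P_t$ at once, so the bound $\max\{2+\lambda,3-\lambda\}\geqslant 5/2$ is obtained by the same route.

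\textbf{The gap.} You never produce the profile or the metrics. You call their mutual compatibility the ``main obstacle'' and leave it open, yet that construction is the entire content of the theorem, so as written nothing is proved. The paper supplies it as follows.
\begin{itemize}
\item Candidates are $O,M,X_1,\dots,X_m,Y_1,\dots,Y_m$, with $\alpha\in(1/2,1)$.
\item Group $X_i$ (mass $\alpha/m$) ranks $X_i\succ M\succ O\succ X_{i+1}\succ\dots\succ X_{i-1}\succ Y_{i+1}\succ\dots\succ Y_i$.
\item Group $Y_i$ (mass $(1-\alpha)/m$) ranks $Y_i\succ O\succ Y_{i+1}\succ\dots\succ Y_{i-1}\succ M\succ X_{i+1}\succ\dots\succ X_i$.
\item $M$ is then a Condorcet winner. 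The vetoing is \emph{not} proportional: all $Y_i$ die at time $(1-\alpha)/\alpha$. Still, $P_t(X)\geqslant\alpha$ for every $t<1$, hence $V(X)\geqslant\alpha$ for every $\mu$.
\item On the line metric ($O$, the $Y_i$ and the $Y$-voters at $0$; $M$ and the $X_i$ at $1$; $X$-voters at $1/2$), ML has distortion $2/\alpha-1$ and $V$ has distortion at least $3-2\alpha$.
\item On the shortest-path metric (each voter at distance $1$ from $O$ and everything she ranks above $O$, distance $3$ from everything below), ML has distortion $3-2\alpha$ and $V$ has distortion at least $3-2\alpha/m$.
\end{itemize}
Social costs are computed directly, so no biased-metric or $\ell$--$r$ verification is needed.

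\textbf{Wrong-direction fallback.} You are lower-bounding $\lambda\,\mathrm{dist}(V)+(1-\lambda)\,\mathrm{dist}(D_{\mathrm{ML}})$. So under $d_B$ you need every $P_t$ to have distortion \emph{at least} $2-\varepsilon$; an upper bound of $2+\varepsilon$ contributes nothing. With only the trivial bound $\mathrm{dist}(V)\geqslant 1$ on $d_B$, you would get just $\max\{2+\lambda,3-2\lambda\}\geqslant 7/3$. Likewise, ``approaching $3$'' and ``approaching $2$'' in your main plan must be read as lower bounds. This is not cosmetic. In the paper's profile the late $P_t$ are supported on the $X_i$'s, whose distortion on the line metric is $2/\alpha-1\approx 3$. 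Your stated condition therefore fails there, while the correct one holds via $P_t(X)\geqslant\alpha$.

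For what it's worth, your Step 1 is achievable. Let a $(2\alpha-1)/m$ share of each group $X_i$ move all $Y$'s above the other $X$'s, so that it vetoes $X_{i-1}$. Then every candidate is vetoed at rate equal to its plurality, giving $V=\mathrm{plu}$ exactly, and both metrics stay consistent. But a uniform lower bound on the $P_t$ already suffices.
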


\begin{proof}{}{}
		We use one family of preference profiles and two different metrics on the same profile. Our first metric makes the maximal lottery expensive, while the other makes every veto average expensive.

		We first describe the preference profile:
		\begin{enumerate}[label=(\arabic*),align=left]
				\item Let $\alpha \in (1 /2, 1)$ and $m\geqslant 3$.
				\item Let the candidates be $O$ (optimal), $M$ (ML winner), $X_1, \hdots, X_m$, and $Y_1, \hdots, Y_m$.
				\item Let there be $2m$ voter groups, defined as follows, where indices are interpreted cyclically:
						\begin{itemize}
								\item Group $X_i$, of mass $\alpha / m$, has the following ranking:
										\[
												X_i \succ M \succ O \succ X_{i+1} \succ \hdots \succ X_{i-1} \succ Y_{i+1} \succ \hdots \succ Y_i.
										\] 
								\item Group $Y_i$, of mass $(1-\alpha) /m$, has the following ranking:
										\[
												Y_i \succ O \succ Y_{i+1} \succ \hdots \succ Y_{i-1} \succ M \succ X_{i+1} \succ \hdots X_i.
										\] 
						\end{itemize}
		\end{enumerate}

		Immediately, candidate $M$ beats $O$ and every $Y_i$ by a share $\alpha > 1 /2$, and $M$ beats every $X_i$ by a share $1 - \alpha / m > 1 /2$. Therefore, ML is the point mass on $M$.

		Now consider the veto process, which only concerns the $X_i$'s and $Y_i$'s as $O$ and $M$ receive zero plurality score. Until the $Y_i$'s are eliminated, each $X_i$ is vetoed at rate $(1-\alpha) /m$, while each $Y_i$ is vetoed at rate $\alpha /m$. Hence, all $Y_i$'s are eliminated at time $(1-\alpha) / \alpha$, while the $X_i$'s still have positive score because $\alpha > 1 /2$. 

		Write $P_t(X) = \sum_{i=1}^{m} P_t(X_i)$. Before the $Y_i$'s are eliminated, $P_t(X) = (\alpha - (1-\alpha)t) / (1-t) \geqslant \alpha$, and afterwards $P_t(X) = 1$. Thus $P_t(X) \geqslant \alpha$ for every $t < 1$. It follows immediately that every veto average satisfies
		\[
				V(X) := \sum_{i=1}^{m} V(X_i) \geqslant \alpha,
		\] 
		and this holds for every choice of $\mu$, adaptive or not.

		We now describe the first metric, in which ML approaches distortion $3$, and veto approaches $2$. In this metric, we place $O$ and all $Y_i$'s at $0$, and $M$ and all $X_i$'s at $1$. Place every $X$-voter at $1 /2$ and every $Y$-voter at $0$. Then the social costs are
		\[
				SC(O) = SC(Y_i) = \frac{\alpha}{2}, \qquad SC(M) = SC(X_i) = 1-\frac{\alpha}{2}.
		\] 
		Hence the distortion of ML is $SC(M) / SC(O) = 2 /\alpha - 1$. The veto average is supported on the $X_i$'s and $Y_i$'s. Since $V(X) \geqslant \alpha$, its distortion is at least
		\[
				\alpha \left( \frac{2}{\alpha}-1 \right)  + (1-\alpha) = 3-2\alpha.
		\] 
		Therefore, $D_{\lambda, V}$ has distortion at least
		\[
				\lambda (3-2\alpha) + (1-\lambda) \left( \frac{2}{\alpha}-1 \right) 
		\] 
		which tends to $3-\lambda$ as $\alpha \to  1 /2$.

		We now describe the second metric, in which every veto average approaches distortion $3$, while ML approaches $2$. Consider a shortest-path metric, induced by the following. First make a complete bipartite graph between the voters and the candidates. Each voter has an edge of length $1$ to $O$ and everyone she ranks above $O$, and an edge of length $3$ to every candidate she ranks below $O$. In this metric, the social costs are
		\begin{align*}
				SC(O) = 1, &\qquad SC(M) = 3 - 2\alpha \\
				SC(X_i) = 3 - \frac{2\alpha}{m}, &\qquad SC(Y_i) = 3 - \frac{2(1-\alpha)}{m}.
		\end{align*}

		Thus the ML has distortion $3 - 2\alpha$. Moreover, since $\alpha > 1 /2$, every $X_i$ and $Y_i$ has distortion at least $3 - 2\alpha / m$. Because every veto average is supported only on the $X_i$'s and $Y_i$'s, its distortion is also at least $3 - 2\alpha / m$. Hence, $D_{\lambda, V}$ has distortion at least
		\[
				\lambda \left( 3 - \frac{2\alpha}{m} \right)  + (1-\lambda) (3-2\alpha)
		\] 
		which tends to $2+\lambda$ as $m\to \infty$ and $\alpha \to 1 /2$. This completes the proof. 
\end{proof}

\section{Discussion}

We conclude with a brief discussion of some of the interesting aspects of our results in the context of prior work. 

First and foremost, our Veto Lottery rule extends the rich family of voting rules introduced by Kizilkaya and Kempe \cite{DBLP:conf/ijcai/KizilkayaK22,KK23}, and brings this family into the conversation of the best possible randomized voting rules. In fact, Kizilkaya and Kempe had already considered adapting Plurality Veto with randomization, by considering the class of rules which stops the veto process at an intermediate point, and chooses a random candidate with probability proportional to their scores. This class interpolates between Random Dictatorship (no veto process) and Plurality Veto (complete veto process), and in \cite{DBLP:conf/ijcai/KizilkayaK22} they showed that every rule in between has distortion 3. 

\cite{CRWW24} also suggests that optimal deterministic voting rules could be useful in improving the randomized distortion bounds, but the challenge is in finding the right variant that can take full advantage of the biased-metric framework. It is exciting that doing so is possible with Veto Lottery, and that it shares the same simplicity of its forebearers, both in its description and analysis.

Taking a step back, our result also offers an exciting perspective on the landscape of randomized metric distortion.
At first glance, the fact that most well-studied voting rules are either tournament rules or plurality-based rules, and neither can beat distortion 3, feels like evidence of a major missing piece in voting theory.
But it now turns out that these two classes of voting rules are somehow complementary, and we can get distortion as low as $5/2$ by mixing one of each in equal proportion. It is redeeming to see them come together and break the barriers they faced individually as two sides of a coin.

\section*{Acknowledgments}

The author especially thanks Prasanna Ramakrishnan for doing an \textit{amazing} job polishing the introduction and for pointing out additional connections to prior literature. The author also thanks, in alphabetical order, Ziyi Cai, Moses Charikar, Jabari Hastings, Prasanna Ramakrishnan, and Kangning Wang for various helpful discussions on a preliminary version of this paper.

\bibliographystyle{alpha}
\bibliography{references}

\end{document}